\newtheorem{defn}{Definition}
\newtheorem{lemma}{Lemma}
\newtheorem{prop}{Proposition}
\newtheorem{thm}{Theorem}
\newcommand{\F}{\ensuremath{\mathbbm{F}}}
\newcommand{\I}{\ensuremath{\mathbbm{I}}}
\newcommand{\psh}[1]{\widehat{#1}}
\newcommand{\fct}[1]{\widecheck{#1}}
\newcommand{\two}{\ensuremath{\mathbbm{2}}}
\newcommand{\op}{\ensuremath{{}^{\operatorname{op}}}}
\newcommand{\um}{\operatorname{-}}
\newcommand{\List}{\operatorname{list}}
\newcommand{\nil}{\operatorname{nil}}
\newcommand{\comp}{\operatorname{comp}}
\newcommand{\Path}{\operatorname{Path}}
\newcommand{\id}{\operatorname{Id}}
\newenvironment{acknowledgement}{\paragraph{Acknowledgement}}{}
\begin{document}
\author{Bas Spitters\\ Aarhus University}
\title{Cubical sets and the topological topos}
\maketitle
\abstract{Coquand's cubical set model for homotopy type theory
    provides the basis for a computational interpretation of the
    univalence axiom and some higher inductive types, as implemented
    in the \emph{cubical} proof assistant. This paper contributes to
    the understanding of this model. We make three contributions:
\begin{enumerate}
\item Johnstone's topological topos was created to present the
  geometric realization of simplicial sets as a geometric
  morphism between toposes. Johnstone shows that simplicial sets
  classify strict linear orders with disjoint endpoints and that
  (classically) the unit interval is such an order.
 Here we show that it can also be a target for cubical
  realization by showing that Coquand's cubical sets classify the
  geometric theory of flat
  distributive lattices. As a side result, we obtain a simplicial realization of a
  cubical set.
\item Using the internal `interval' in the topos of cubical sets, we
  construct a Moore path model of identity types.
\item We construct a premodel structure internally in the
  cubical type theory and hence on the fibrant objects in
  cubical sets.
\end{enumerate}
}

\section{Introduction}
Simplicial sets from a standard framework for homotopy theory. The
topos of simplicial sets is the classifying topos of the theory of
strict linear orders with endpoints.
Cubical sets turn out to be more amenable to a
constructive treatment of homotopy type theory. 
We consider the cubical set model in~\cite{Cubical}.
This consists of symmetric cubical sets with connections
($\wedge,\vee$), reversions ($\neg$) and diagonals. 
In fact, to present the geometric realization clearly, we will leave
out the reversions. We now give a precise definition.

\section{Classifying topos and geometric realization}
\paragraph{Cube category}
 Consider the monad $DL$ on  the category of finite sets with all
maps which assigns to each
finite set $F$ the \emph{finite} set of the free distributive lattice on $F$. That this
set is finite can be seen using the disjunctive normal form.
The \emph{cube category} is the Kleisli category for the monad $DL$.
The diagonals above refer to the use of all maps, as opposed to only
the injective ones; cf~\cite{pitts2014equivalent}.

\paragraph{Lawvere theory}
For each algebraic (=finite product) theory $T$, the
Lawvere theory $\Theta\op_{T}$ is the opposite of the category of free finitely
generated models. This is the classifying category for $T$:
models of $T$ in any finite product category category $E$ correspond to product-preserving
functors from the syntactic category $C_{fp}[T]$ to $E$.
For a monad $T$ on finite sets, the Kleisli category $KL_{T}$ is precisely the
\emph{opposite} of the Lawvere theory:
maps $I\to T(J)$ are equivalent to $T$-maps $T(I)\to T(J)$ since each such map is completely
determined by its behavior on the atoms, as $T(I)$ is free.

\paragraph{Cube category, contravariantly}
Let $\two$ be the poset with two elements, one smaller than the other. Consider $\Box$, the
full subcategory of Cat consisting of the powers of
\two.
We have a Stone duality between finite posets and finite distibutive
lattices with $\two$ as dualizing object.
This duality is given by a functor $J=\hom_{\mathrm{fDL}}(-,\two)$ from finite distributive
lattices to the opposite of the category of finite posets~\cite{birkhoff1937rings}\cite{Wraith}. This functor sends a distributive lattice to its
join-irreducible elements. It's inverse is the functor $\hom_{\mathrm{poset}}(-,\two)$ which sends a poset to its the
distributive lattice of lower sets. This restricts to a duality
between free finitely generated distributive lattices and powers
of $\two$ (the copowers of DL1).

\paragraph{De Morgan algebras}
There are two more specific dualities~\cite{cornish1977coproducts,cornish1979coproducts}:\\
Finite involutive posets and finite DeMorgan-algebras, with dualizing object $\two^2$.\\
Finite involutive posets such that $x\leq \neg x$ or $\neg x\leq x$
and finite Kleene algebras, with dualizing object 3 (three points on a line).

The natural involution on $\two$ provides us with an involutive poset
and hence, dually, with a De Morgan algebra. 
Every free finitely generated DeMorgan algebra on $n$ generators is a free distributive lattice
on $2n$ generators. We obtain a duality with the category \emph{even} powers of $\two$
and maps preserving the De Morgan involution~\cite{cornish1977coproducts}.

Although it is natural to consider De Morgan algebras or Kleene
algebras in the implementation, we will focus on distributive lattices in what follows, mainly in view
of the geometrical realization; see Section~\ref{sec:demorg-algebr-invol}.

\paragraph*{Classifying topos}
Objects of the topos $\psh{\Box}\equiv\psh{\Theta_{DL}}$ are called
\emph{cubical sets}. The following theorem by Johnstone-Wraith~\cite[Thm
5.22]{johnstone1978algebraic} tells us what this topos classifies.
\begin{thm}\label{JW} Let $\Theta_T\op$ be a Lawvere
  theory. Then $\psh{\Theta_T}$ classifies flat $T$-models. 
\end{thm}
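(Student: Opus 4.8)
The plan is to derive this from Diaconescu's theorem on classifying toposes of flat functors. Recall that for a small category $\mathcal{C}$ and a (Grothendieck) topos $\mathcal{E}$ there is an equivalence $\mathbf{Geom}(\mathcal{E},\psh{\mathcal{C}})\simeq\mathbf{Flat}(\mathcal{C},\mathcal{E})$, natural in $\mathcal{E}$, sending a geometric morphism $f$ to $f^{*}\circ y$ (with $y\colon\mathcal{C}\to\psh{\mathcal{C}}$ the Yoneda embedding) and a flat functor $F$ to the geometric morphism whose inverse image is the left Kan extension $\operatorname{Lan}_{y}F=(-)\otimes_{\mathcal{C}}F$ --- this being left exact exactly because $F$ is flat, i.e.\ filtering (equivalently, its category of elements is cofiltered). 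Taking $\mathcal{C}=\Theta_{T}$ gives $\mathbf{Geom}(\mathcal{E},\psh{\Theta_{T}})\simeq\mathbf{Flat}(\Theta_{T},\mathcal{E})$, so it remains to identify flat functors $\Theta_{T}\to\mathcal{E}$ with flat $T$-models in $\mathcal{E}$.

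For that step, recall that $\Theta_{T}$ has finite coproducts --- the free model on $m+n$ generators is the coproduct of those on $m$ and on $n$ generators --- so that $\Theta_{T}\op$ is a Lawvere theory, equivalent to the algebraic syntactic category $C_{fp}[T]$, and a $T$-algebra object of $\mathcal{E}$ is the same thing as a finite-product-preserving functor $\Theta_{T}\op\to\mathcal{E}$. A flat $T$-model in $\mathcal{E}$ may be described as a $T$-algebra object whose canonical diagram of finitely generated free approximations is filtering; the content of the Johnstone--Wraith theorem is that unwinding this data yields precisely a flat functor $F\colon\Theta_{T}\to\mathcal{E}$, with $F(T(1))$ the underlying object, the operations read off from the morphisms of $\Theta_{T}$, and the inhabited/connecting/equalizing clauses of filtering corresponding to the flatness axioms. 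Since those clauses are geometric sequents over the signature of $T$, ``flat $T$-model'' is a genuine geometric theory, and by the above its generic model is the Yoneda embedding $\Theta_{T}\hookrightarrow\psh{\Theta_{T}}$, whose value at $T(1)$ is the generic free model on one generator (for $T=DL$, the interval object $\I$).

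I expect the difficulties here to be bookkeeping rather than depth. The first is variance: $\psh{\Theta_{T}}\simeq\psh{\Box}$ is presheaves on (a category equivalent to) the Kleisli category, and since $\Theta_{T}$ is not self-dual one must be careful whether the classified structure sits on $\Theta_{T}$ or on $\Theta_{T}\op$, i.e.\ whether Diaconescu's theorem is invoked covariantly or contravariantly. The second, and the only place where one does more than quote Diaconescu, is checking that the purely functorial filtering condition on $F\colon\Theta_{T}\to\mathcal{E}$ matches the algebra-theoretic flatness condition, coherently in $\mathcal{E}$; this is also what supplies the explicit description of the universal flat $T$-model used later for the cubical interval.
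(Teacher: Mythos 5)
Your proof is correct in outline, but it is a genuinely different decomposition from the one the paper sketches for Theorem~\ref{JW}. The paper first builds the classifying topos of \emph{all} $T$-algebras as functors on finitely presented algebras, notes that the fully faithful inclusion $FG\to FP$ of the free finitely generated algebras induces a subtopos $\fct{FG}\hookrightarrow\fct{FP}$, and invokes the correspondence between subtoposes and quotient theories: $\fct{FG}\simeq\psh{\Theta_T}$ therefore classifies the theory of the pulled-back generic algebra $\I(m):=m$, and it then remains to recognise that quotient theory as flatness. You instead apply Diaconescu's theorem directly to $\psh{\Theta_T}$ and then identify filtering functors on $\Theta_T$ with flat $T$-models. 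The trade-off is real: the subtopos route gets ``some quotient of the theory of $T$-algebras'' for free from general topos theory but does not by itself say which axioms are added, while your route names the classified theory at once but must check by hand that the filtering clauses on a functor correspond to the flatness axioms on the associated algebra $S_D$, uniformly in $\mathcal{E}$ --- precisely the step you isolate as the only real content beyond Diaconescu. It is worth noting that your route is essentially the one the paper itself uses later for the special case $T=DL$: Proposition~\ref{prop:flatfree} shows the generic flat functor is $S_\I$ and that any flat functor is $F^*S_\I=S_{F^*\I}$ by geometricity of $S$, yielding Theorem~\ref{class}. Your caution about variance is warranted rather than a defect: one must arrange matters so that the representable flat functors come out as the free algebras $S_{T(n)}$, so that filtered colimits of representables match filtered colimits of free models; the paper is itself loose on exactly this point.
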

Here a \emph{flat} $T$-models is one that can be expressed as a
filtered colimit of free models. Flatness is a geometric notion~\cite[Thm 5.22]{johnstone1978algebraic}.

Below we will compute what this means for the algebraic theory of distributive lattices,
but first we indicate how this theorem can be proved.
To obtain the classifying topos for an algebraic theory $T$, we first need to
complete with finite limits, i.e.\ to consider the 
classifying category $C_{fl}$ as the
\emph{opposite} of finitely \emph{presented} $T$-algebras.
Then $C_{fl}^{op}\to Set$, i.e.\ functors on finitely presented
$T$-algebras, is the classifying topos for the theory of $T$-algebras. This topos contains a generic $T$-algebra.
$T$-algebras in any topos $\mathcal{F}$ correspond to \emph{left exact left adjoint}
functors from the classifying topos to $\mathcal{F}$.


Let $FG$ be the category of \emph{free finitely generated} $T$-algebras
and let $FP$ the category of \emph{finitely presented} ones. We have a fully
faithful functor $f:FG \to FP$. This gives a 
geometric morphism $\phi:\fct{FG}\to\fct{FP}$. Since $f$ is fully faithful, $\phi$ is an embedding~\cite[A4.2.12(b)]{elephant}. 
%
%
The subtopos $\fct{FG}$ of the classifying topos for $T$-algebras is
given by a quotient theory, the theory of the model
$\phi^*M$. This model is given by pullback and thus is equivalent to the
canonical $T$-algebra $\I(m):=m$ for each $m\in FG$. 

\paragraph{Flat distributive lattices}
Let $D$ be a distributive lattice in a topos $\mathcal{X}$. Then by the standard
construction in Lawvere theories, define $S_D:\Theta\op_{DL}\to
\mathcal{X}$ on objects by $S_D(n):=D^n$ and for a map $\phi:n\to DL m$ define a
map $S_D\phi:D^m\to D^n$. This is well-defined since a distributive lattice is an algebra for the DL-monad. It
follows that $S_D:\Box\to \mathcal{X}$ is a cocubical object in $\mathcal{X}$.

A Set-valued functor is $E : C \to Set$ is flat if it is filtering~\cite[VII.6]{maclanemoerdijk}:
\begin{description}
\item[inhabited] There is at least one object $c \in C$ such that $E(c)$ is an inhabited set.
\item[transitivity] For objects $c,d \in C$ and elements $y \in E(c)$,
  $z \in E(d)$, there exists an object $b \in C$, morphisms $\alpha :
  b \to c$, $\beta : b \to d$ and an element $w \in E(b)$ such that
  $E(\alpha)(w)=y$ and $E(\beta)(w)= z$.
\item[freeness] For two parallel morphisms $\alpha, \beta : c \to d$ and $y \in E(c)$ such that $E(\alpha)(y) = E(\beta)(y)$, there exists a morphism $\gamma : b \to c$ and an element $z \in E(b)$ such that $\alpha \circ \gamma = \beta \circ \gamma$ and $E(\gamma)(z) =y$.
\end{description}
Specializing the general definition of a flat model to $T$-algebras
($C=\Theta\op_{T}$), a $T$-algebra $D$, we observed that the first two
conditions always hold:
\begin{description}
\item[inhabited] $D^1$ is inhabited.
\item[transitivity] given $d\in D^n$ and $d'\in D^m$, then $d,d'\in D^{n+m}$ shows transitivity.
\end{description}

So, a $D$ is \emph{flat} if for all
$\alpha,\beta:n\to Tm$ and $d\in D^m$ st $\alpha d=\beta d$, there
exists $\gamma:m\to Tk$ such that $\alpha\gamma=\beta\gamma$ and
there exists a $d'\in D^k$ such that $\gamma d'=d$.

Put more simply, if we have two $n$-ary $T$-expressions (`polynomials') $\alpha,\beta$
which when applied to $d$ are equal, then there exists $d'$ such
that $\gamma d'=d$ and $\alpha,\beta$ are both constructed from
$\gamma$. 

Flat functors generalizes the abstract definition of flat
modules~\cite[VII]{maclanemoerdijk}. The definition above is similar
to the elementary definition of flat modules, with the difference that we 
lack subtraction, and hence equalities between terms cannot be replaced
by being equal to 0.

Vickers~\cite{vickers2007locales} observes that being flat is a
geometric notion. The following is implicit in~\cite{johnstone1978algebraic}.

\begin{lemma}
  A (free) finitely generated $T$-algebra is flat. Hence, the generic
$T$-algebra is a flat model.
\end{lemma}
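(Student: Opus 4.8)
The plan is to verify the third filtering condition (\textbf{freeness}) directly for a free finitely generated $T$-algebra, using the description of $\Theta_T\op$ as the Kleisli category $KL_T$. Recall that an object of $\Theta_T\op$ is a natural number $n$, a morphism $n \to m$ is a map $n \to T(m)$ (equivalently a $T$-homomorphism $T(n) \to T(m)$), and the $T$-algebra we care about is the canonical one $\I(k) := T(k)$, acting on objects by $S_{T(k)}(n) = T(k)^n \cong \hom_{\Theta_T\op}(n, k)$ under the Kleisli/Lawvere correspondence. So an element $d \in S_{T(k)}(m)$ is literally a morphism $d : m \to k$ in $\Theta_T\op$, and the contravariant action $S_{T(k)}(\alpha)$ of a morphism $\alpha : n \to m$ sends $d : m \to k$ to $d \circ \alpha : n \to k$.

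With this dictionary, freeness becomes a purely categorical statement about $\Theta_T\op$: given parallel arrows $\alpha, \beta : n \to m$ and an arrow $d : m \to k$ with $d \alpha = d \beta$, I must produce an arrow $\gamma : m \to j$ and an arrow $z : j \to k$ with $\alpha \gamma = \beta \gamma$ (note: in the statement of freeness the roles are stated in $C = \Theta_T\op$, so reading off carefully, I need $\gamma : b \to c$ with $\alpha\gamma = \beta\gamma$ in $C$ and $z \in E(b)$ with $E(\gamma)(z) = y$, i.e.\ $z \circ \gamma = d$). The obvious candidate is to take $j := k$, $\gamma := d$ itself, and $z := \id_k$: then $\alpha\gamma = \alpha d$... but wait, that coequalizes the wrong pair. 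The correct move is: we are given $d\alpha = d\beta$, so $d$ \emph{coequalizes} $\alpha, \beta$; take $\gamma := d : m \to k$, and then indeed $\gamma \alpha = \gamma\beta$ holds by hypothesis, and $z := \id_k : k \to k$ satisfies $z \circ \gamma = d$. This is exactly the required data, so freeness holds essentially trivially once the variance bookkeeping is sorted out.

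The inhabited and transitivity conditions were already checked in the excerpt for arbitrary $T$-algebras $D$ (taking $D^1$ inhabited and using $D^{n+m}$ to merge elements of $D^n$ and $D^m$), so no further work is needed there; a free finitely generated algebra is in particular a $T$-algebra, hence inherits these. Combining the three conditions, $E = S_{T(k)} = \hom_{\Theta_T\op}(-, k)$ is filtering, i.e.\ flat, as a $\Set$-valued functor, which is precisely the claim. For the final sentence, the generic $T$-algebra in $\psh{\Theta_T}$ is the canonical diagram $k \mapsto \I(k) = \hom_{\Theta_T\op}(-,k)$, i.e.\ the Yoneda embedding postcomposed appropriately; flatness of the generic model is then either the assertion that each representable is a flat functor — which is what we just proved — or, more conceptually, follows from the fact that the Yoneda embedding $\Theta_T\op \to \psh{\Theta_T}$ is the universal flat (filtered-colimit-of-representables) functor, and representables are the atoms of that colimit.

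The one genuine subtlety — and the step I expect to need the most care — is the variance: $\Theta_T\op$ versus $\Theta_T$, covariant versus contravariant action, and whether ``$\gamma$ constructed from'' the pair means $\gamma$ precomposes or postcomposes. Getting consistent conventions so that ``$d$ coequalizes $\alpha,\beta$'' lines up syntactically with the freeness clause is the whole content; once the Kleisli description of morphisms and the identification $S_{T(k)}(-) \cong \hom(-,k)$ are in place, the argument is a one-line diagram chase with $\gamma = d$ and $z = \id$. I would therefore devote most of the written proof to carefully stating the dictionary (objects $=$ naturals, morphisms $=$ Kleisli maps, $\I$ $=$ canonical algebra, action $=$ precomposition) and then let freeness fall out.
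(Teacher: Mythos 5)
Your proposal is correct and is essentially the paper's own argument in categorical dress: the paper's witness $d'$ (the list of generators, an element of $T(k)^k$) is exactly your $\id_k$ under the identification $S_{T(k)}\cong\hom(-,k)$, and the paper's $\gamma$ (expressing $d$ as polynomials in the generators) is exactly your $\gamma=d$, with ``$\gamma$ is determined by its behaviour on the generators'' playing the role of evaluation at the identity. The only cosmetic difference is in the final sentence, where you justify flatness of the generic algebra via the Yoneda embedding being the generic flat functor (which is what Proposition~\ref{prop:flatfree} records) rather than by the paper's appeal to geometricity of flatness; both are fine.
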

\begin{proof}
  For readability, we fix $T$ to be the theory of distributive lattices.
  Every element $d$, e.g.\ $(x,x\vee y,y\wedge x)$, is the image under
  a map $\gamma$ of a list of generators $d'$, $(x,y)$ in the example. 
  We also have $\alpha\gamma=\beta\gamma$, as $\gamma$ is completely
  determined by its behavior on the generators. This shows that
  finitely generated distributive lattices are free.

  Since, being flat is a geometric statement it also holds for the
  generic element $\I$.
\end{proof}

\begin{lemma}
Flat distributive lattices have the disjunction property: If $a\vee
b=1$, then $a=1$ or $b=1$.
\end{lemma}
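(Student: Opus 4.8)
The plan is to unfold the flatness condition at a carefully chosen pair of parallel maps in $\Theta\op_{DL}$ and read off the disjunction property. Suppose $D$ is flat and $a\vee b=1$ in $D$. I want to exhibit a single $2$-ary $DL$-expression situation witnessing that the equation $a\vee b=1$ came about ``generically,'' and then argue that a generic $d'$ realizing it forces $a=1$ or $b=1$.

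Concretely, first I would set up the relevant morphisms. Consider $d=(a,b)\in D^2$, and the two parallel maps $\alpha,\beta:1\to DL(2)$ given by the expressions $\alpha(x_0)=x_0\vee x_1$ and $\beta(x_0)=1$ (the top element, a nullary operation of $DL$, viewed as a $1$-ary expression in the two variables $x_0,x_1$). By hypothesis $S_D(\alpha)(d)=a\vee b=1=S_D(\beta)(d)$, so the freeness clause of filtering applies: there is a map $\gamma:2\to DL(k)$ and $d'\in D^k$ with $\alpha\circ\gamma=\beta\circ\gamma$ in $\Theta\op_{DL}$ and $S_D(\gamma)(d')=d$. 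Writing $\gamma=(p,q)$ with $p,q\in DL(k)$ two $k$-ary lattice polynomials, the condition $\alpha\gamma=\beta\gamma$ says exactly $p\vee q=1$ \emph{in the free distributive lattice} $DL(k)$, while $S_D(\gamma)(d')=d$ says $a=p(d')$ and $b=q(d')$.

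The crux is then a purely combinatorial fact about free distributive lattices: if $p\vee q=1$ in $DL(k)$, then $p=1$ or $q=1$ already in $DL(k)$. This holds because $DL(k)$ is the lattice of ``monotone'' subsets / can be described via its join-irreducibles, and its top element $1$ is join-irreducible in the relevant sense — more concretely, evaluating at the constant tuples shows $1$ cannot be written as a join of two strictly smaller polynomials. (Equivalently: $DL(k)$ has the disjunction property because it embeds in a product of copies of $\two$ in a way that detects $1$ coordinatewise, and one uses the prime filter generated by the top; I would spell this out via disjunctive normal form, where $p\vee q=1$ forces one of the two DNFs to already contain the empty monomial.) Once we know, say, $p=1$ in $DL(k)$, functoriality of $S_D$ gives $a=S_D(p)(d')=1$ in $D$, and we are done; the other case is symmetric.

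I expect the main obstacle to be exactly the free-lattice disjunction fact — making sure it is stated and proved constructively, since the whole point is that flatness is a geometric notion and the conclusion should hold in any topos. The cleanest route is to avoid prime filters entirely and argue directly with disjunctive normal forms: every element of $DL(k)$ is a finite join of monomials (meets of generators), $1$ is the join over the empty monomial, and $p\vee q=1$ means the combined DNF reduces to $1$, which by the absorption/comparison of monomials forces the empty monomial to already appear in the DNF of $p$ or of $q$. This is decidable and constructive on the (finite, free) lattice $DL(k)$, and then the transfer to $D$ via the geometric morphism needs no choice. A secondary point to be careful about: the argument only uses the freeness clause of filtering, and it is worth remarking that this clause is precisely the ``new'' content beyond the always-true inhabited and transitivity clauses noted above.
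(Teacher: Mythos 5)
Your proposal is correct and follows essentially the same route as the paper: the same instantiation $d=(a,b)$, $\alpha=x\vee y$, $\beta=1$ of the freeness clause, followed by the observation that $\alpha\gamma=\beta\gamma$ forces one component of $\gamma$ to be the top element. The only difference is that you explicitly isolate and prove the key combinatorial fact --- that $p\vee q=1$ in the free distributive lattice $DL(k)$ implies $p=1$ or $q=1$ (e.g.\ by evaluating at the all-zero tuple or by inspecting disjunctive normal forms) --- which the paper compresses into the remark that ``there are essentially two possibilities for $\gamma$''; this is a welcome sharpening rather than a different argument.
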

\begin{proof}
Consider $d=(a,b)$ and $\alpha=x\vee y$
and $\beta=1$. There are essentially two possibilities for $\gamma$:
$\gamma(x)=(x,1)$ or $\gamma(x)=(1,x)$. From $\gamma d'=d$, it follows
that $a=1$ or $b=1$.  
\end{proof}

This disjunction property crucial in CTT, as it is used to prove that
we have a lattice homomorphism $\I\to\F$~\cite[3.1]{connections}\cite{GCTT}.

By Diaconescu's theorem, flat functors correspond to geometric morphisms. In fact~\cite{vickers2007locales}, a presheaf topos is the classifying topos for flat functors. The generic element is the Yoneda flat functor.

\begin{prop}\label{prop:flatfree}
  Every flat functor $\Box\to \mathcal{X}$ is $S_D$ for some flat $D$ in $\mathcal{X}$.
\end{prop}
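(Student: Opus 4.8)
The plan is to read the distributive lattice off the functor and then check the two halves of the claim separately. Given a flat functor $F\colon\Box\to\mathcal{X}$, put $D:=F(\two)$, the value at the generating object $\two=\two^1$. I would then prove (i) that $F\cong S_D$, and (ii) that $D$ is flat.

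For (i), the key point is that a flat functor \emph{preserves finite products}. Indeed, flat functors preserve all finite limits that exist in their domain \cite[VII.6]{maclanemoerdijk}; equivalently, by Diaconescu, $F$ is the inverse image of a geometric morphism $\mathcal{X}\to\psh{\Box}$ and is therefore left exact. And $\Box=\Theta\op_{DL}$ is a Lawvere theory --- the opposite of the category of free finitely generated distributive lattices --- so it has finite products, with $\two^n$ the $n$-fold power of $\two$. Hence $F(\two^n)\cong D^n$, naturally in $n$. Now $\Box$ is generated, as a finite-product category, by the object $\two$ together with the morphisms naming the operations $\meet,\vee,0,1$; applying $F$ to these morphisms equips $D=F(\two)$ with maps $D^2\to D$ and $1\to D$, and functoriality transports each equational axiom of distributive lattices, which in $\Box$ is an identity between morphisms, to the corresponding identity in $\mathcal{X}$. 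So $D$ is a distributive lattice object, and, by the construction of $S_{(-)}$ recalled above (the standard passage from a model to a product-preserving functor on a Lawvere theory), $S_D$ is precisely the product-preserving functor $\Box\to\mathcal{X}$ whose value at $\two$ is $D$ with this structure. Since $F$ is also such a functor, $F\cong S_D$.

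For (ii), it suffices to check that $S_D$ is filtering. Two of the three filtering conditions hold automatically for any product-preserving functor: $S_D$ sends the empty product $\two^0$ to the terminal object $1$, which is inhabited; and transitivity is witnessed by taking $b:=c\times d$ with its two projections and $w:=(y,z)$, using $S_D(c\times d)\cong S_D(c)\times S_D(d)$. So flatness of $S_D$ comes down to the freeness clause, which is exactly the elementary reformulation of flatness given above for a distributive lattice $D$ --- and this holds because $F\cong S_D$ is flat by hypothesis. (Alternatively: $F$ corresponds under Diaconescu to a geometric morphism $\mathcal{X}\to\psh{\Box}$, which by Theorem~\ref{JW} picks out a flat $DL$-model in $\mathcal{X}$; that model is the pullback of the generic lattice along $F$, namely $D$, so $D$ is flat.)

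I expect the first step to be the only real obstacle: one must take care with the Kleisli-versus-Lawvere variance so that ``preserves products'' is read on the correct side, and one must line up the residual flatness of $S_D$ with the elementary condition on $D$. Both ingredients, however, are already in place in the discussion preceding the proposition, after which the identification $F\cong S_D$ is a formal consequence of the universal property of the Lawvere theory $\Theta_{DL}$.
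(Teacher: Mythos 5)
Your argument is correct in substance but takes a genuinely different route from the paper's. You work pointwise: set $D:=F(\two)$, use that a flat functor preserves whatever finite limits exist in its domain (hence the finite products of the Lawvere theory), deduce $F\cong S_D$ from the universal property of $\Theta\op_{DL}$, and then read off flatness of $D$ from flatness of $S_D\cong F$. This is essentially the standard proof of Theorem~\ref{JW} specialized to distributive lattices. The paper instead argues \emph{generically}: it first computes that the Yoneda embedding $y\colon\Box\to\psh\Box$ is literally $S_\I$ for the generic free lattice $\I(n)=DL(n)$, and then, for an arbitrary flat $f=F^\ast y$ (Diaconescu), uses that $D\mapsto S_D$ is a geometric construction to conclude $f=F^\ast S_\I=S_{F^\ast\I}$. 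The generic-plus-transport form is what gets reused immediately afterwards (the coend formula for $r^\ast$ in Theorem~\ref{geom} rests on the identification $y=S_\I$), whereas your version is more self-contained and makes explicit where the lattice structure on $F(\two)$ comes from. Two remarks. First, the variance issue you flag is real: as officially defined, $\Box$ is the Kleisli category $\Theta_{DL}$, in which $n$ is a \emph{copower} of $1$, so your product-preservation step requires reading $\Box$ on the powers-of-$\two$ (Lawvere-theory) side --- which is in fact the convention the paper's own computation $y(n)=\hom_{DL}(n,-)$ silently adopts, so your reading is the one under which the proposition is true. Second, your fallback via Theorem~\ref{JW} should be dropped: the proposition is the paper's route \emph{to} the special case Theorem~\ref{class}, so invoking the classification there is circular in spirit; your direct filtering check is the right primary argument.
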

\begin{proof}
We first consider the generic flat functor: $y:\Box\to\psh{\Box}$.\\
Let $\I(n):=DL(n)$ be the generic free distributive lattice in $\psh{\Theta_{DL}}=\psh{\Box}$.
We have: $S_\I(n)=\I^n=\hom_{DL}(n,\I)$. The latter is the hom-set in
the Kleisli category. Since products are geometric,
$\I^n(m)=DL(m)^n=\hom_{DL}(n,m)$ for all $m$. Note that $y(n):=hom_\Box(-,n)=\hom_{DL}(n,-)$. So, $y=S_\I$!\\
As Johnstone~\cite{johnstone1979topological} observes (for simplicial
sets) this suffices. Let $f$ be flat functor, then
$f=F^\ast y$ for a geometric morphism $F$. Since the construction of
$S$ is geometric, as it only uses natural numbers and free
constructions, we have $f=F^\ast S_\I=S_{F^\ast\I}$.
\end{proof}

The previous proposition allows us to conclude a special case of Theorem~\ref{JW}.
\begin{thm}\label{class}
The topos $\psh{\Box}$ of cubical sets classifies flat distributive lattices.
\end{thm}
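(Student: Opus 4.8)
The plan is to assemble Theorem~\ref{class} from the pieces already in place: Theorem~\ref{JW} specialized to $T = DL$, together with Proposition~\ref{prop:flatfree}, which identifies the flat functors on $\Box$ with the cocubical objects $S_D$ coming from flat distributive lattices. Concretely, I would argue that the classifying topos for flat distributive lattices is, by definition, the topos $\mathcal{E}$ equipped with a flat distributive lattice $U$ such that for every (Grothendieck) topos $\mathcal{X}$, geometric morphisms $\mathcal{X} \to \mathcal{E}$ correspond naturally to flat distributive lattices in $\mathcal{X}$. So it suffices to produce such an $\mathcal{E}$ and $U$ and check the universal property, and the candidate is $\mathcal{E} = \psh{\Box}$ with $U = \I$ the generic free distributive lattice.

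First I would recall from the discussion preceding the proposition that, by Diaconescu's theorem in the form cited to Vickers, the presheaf topos $\psh{\Box}$ is the classifying topos for flat functors $\Box \to (-)$: a geometric morphism $F : \mathcal{X} \to \psh{\Box}$ corresponds to a flat functor $F^\ast y : \Box \to \mathcal{X}$, where $y$ is the Yoneda embedding. Next I would invoke Proposition~\ref{prop:flatfree}: every flat functor $\Box \to \mathcal{X}$ is of the form $S_D$ for a flat distributive lattice $D$ in $\mathcal{X}$, and conversely every flat $D$ gives such a flat functor $S_D$ (the forward direction is the content of the proposition; the converse is the observation that $S_D$ is filtering whenever $D$ is flat, which is how flatness of $D$ was extracted in the first place). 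The assignment $D \mapsto S_D$ and the recovery $S_D \mapsto S_D(1) = D$ are mutually inverse, and both are natural in $\mathcal{X}$ because $S_{(-)}$ is built from finite products and free constructions, hence preserved by inverse image functors — this is exactly the naturality used inside the proof of Proposition~\ref{prop:flatfree} to get $F^\ast S_\I = S_{F^\ast \I}$.

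Chaining these bijections gives, for each $\mathcal{X}$, a natural equivalence between geometric morphisms $\mathcal{X} \to \psh{\Box}$ and flat distributive lattices in $\mathcal{X}$, with the generic one on the side of $\psh{\Box}$ being $\I = S_\I(1)$. That is precisely the statement that $\psh{\Box}$ classifies flat distributive lattices, which is the theorem. I would also remark that this is the promised special case of Theorem~\ref{JW}: the general Johnstone--Wraith result says $\psh{\Theta_T}$ classifies flat $T$-models, and here $\Theta_T\op = \Box$ for $T = DL$, so the abstract theorem and the hands-on argument agree.

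The main obstacle is not any single hard step but making sure the correspondence is genuinely an \emph{equivalence of categories} (natural in $\mathcal{X}$), not just a bijection on objects: one must check that a natural transformation of flat functors $S_D \Rightarrow S_{D'}$ corresponds exactly to a distributive-lattice homomorphism $D \to D'$, and that this matches the $2$-cells between geometric morphisms. This is where the algebraic structure of $\Box = \Theta_{DL}\op$ does the work — a cocone/natural transformation out of the Lawvere theory is determined by its component at the generating object $1$, and compatibility with the coprojections forces that component to be a lattice map — so the verification is routine but is the place where the Lawvere-theory formalism recalled in the ``Lawvere theory'' paragraph must be used carefully rather than waved at.
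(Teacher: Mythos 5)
Your proposal is correct and follows essentially the same route as the paper, which derives Theorem~\ref{class} by combining Diaconescu's theorem (a presheaf topos classifies flat functors, with Yoneda as the generic one) with Proposition~\ref{prop:flatfree} identifying flat functors $\Box\to\mathcal{X}$ with the $S_D$ for flat distributive lattices $D$, and noting this instantiates Theorem~\ref{JW} at $T=DL$. Your added care about the correspondence being an equivalence of categories (natural transformations of flat functors versus lattice homomorphisms) is a reasonable refinement that the paper leaves implicit.
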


\subsection{Geometric realization}
In Theorem~\ref{geom} we will construct a cubical geometric realization.

\begin{prop}\label{linDL}
  Every linear order $D$ defines a flat distributive lattice. Hence, we have a geometric morphism $\psh{\Delta}\to\psh\Box$.
\end{prop}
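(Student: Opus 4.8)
The plan is to produce, inside the classifying topos $\psh\Delta$ of strict linear orders with disjoint endpoints, a flat distributive lattice, and then to invoke Theorem~\ref{class}. Since the passage from a linear order to this lattice will be geometric, it suffices to treat an arbitrary linear order $D$, with bottom $0$ and top $1$, in an arbitrary topos; the generic one in $\psh\Delta$ then yields the asserted geometric morphism.

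The first step is routine: equip $D$ with $x\wedge y:=\min(x,y)$ and $x\vee y:=\max(x,y)$. Totality makes these well-defined total operations, $0$ and $1$ are neutral, and the lattice laws — distributivity above all — reduce to the finite case split on the order of the arguments that is valid in any chain. Each of these is a geometric sequent over the theory of linear orders, so the structure is stable under inverse-image functors.

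The substance is flatness, which I would verify against the elementary criterion recalled above. Given $\alpha,\beta\colon n\to DL(m)$ and $d=(d_1,\dots,d_m)\in D^m$ with $\alpha(d)=\beta(d)$, the entries of $d$ generate a finite subchain $C=\{0=c_0<c_1<\dots<c_{q-1}=1\}$ of $D$, and the decisive observation is that a finite chain embeds as a sublattice into a free distributive lattice: sending $c_t$ to $w_1\vee\cdots\vee w_t$ (and the endpoints to $0$ and $1$) is an injective homomorphism $C\hookrightarrow DL(q-2)$, because at an increasing tuple each initial join $w_1\vee\cdots\vee w_t$ evaluates to $c_t$; in fact this embedding splits, so finite chains are even retracts of free lattices and hence flat. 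Reading off $\gamma$ from this embedding and taking $d':=(c_1,\dots,c_{q-2})$ then yields $\gamma(d')=d$, and, since the embedding is a homomorphism and the values $\alpha_i(d),\beta_i(d)$ already lie in $C$, the equations $\alpha_i(d)=\beta_i(d)$ push forward to $\alpha_i\gamma=\beta_i\gamma$; note that $\gamma$ depends only on $d$, so all $n,\alpha,\beta$ are covered uniformly.

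I expect the main obstacle to be less the computation than the bookkeeping that makes it transfer to the generic order: one must check that $c_t\mapsto w_1\vee\cdots\vee w_t$ really is a lattice homomorphism, that the directions of the morphisms in the cube category are as used, and — since for each fixed $n,m,\alpha,\beta$ the flatness criterion is a geometric sequent over the theory of linear orders (an infinite disjunction over the possible $k$ and $\gamma$, each disjunct an existential over a conjunction of equations) — that the argument uses nothing beyond totality, finitely many case distinctions and free constructions, so that it holds for the generic linear order of $\psh\Delta$. Granting this, that order becomes a flat distributive lattice, Theorem~\ref{class} turns it into a geometric morphism $\psh\Delta\to\psh\Box$, and the induced inverse image $\psh\Box\to\psh\Delta$ realises each cubical set as a simplicial set, sending the $n$-cube to $(\Delta^1)^n$, exactly as in the computation of $S_\I$ in Proposition~\ref{prop:flatfree}.
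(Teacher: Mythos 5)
Your proof is correct and follows essentially the same route as the paper's: sort the entries of $d$ into a finite chain $d'$ and embed that chain into a free distributive lattice via the initial joins $w_1\vee\cdots\vee w_t$, which is exactly the paper's $\gamma_2$ composed with its reindexing $\gamma_1$, resting on the same key fact that a finite chain sits inside a free distributive lattice of the same size. Your way of getting $\alpha\gamma=\beta\gamma$ --- pushing the equations $\alpha_i(d)=\beta_i(d)$ forward along the sublattice embedding --- is a slightly cleaner packaging of the paper's observation that lattice polynomials evaluated on a chain reduce to projections, but it is the same argument.
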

\begin{proof}
  Obviously $D$ defines a distributive lattice. To check freeness,
  take $d\in D^m$ and $\alpha,\beta:n\to DL m$ as above.  Now choose $d'\in D^k$
  which orders the elements of $d$ and removes duplicates. Define
  $\gamma_1:D^k\to D^m$ such that $\gamma_1 d'=d$. This is possible
  because we have a (decidable) linear order. As $\alpha d=\beta d\in D^n$
  each coordinate function must be equal. Moreover, since $D$ is a
  linear order, the lattice operations reduce to one of their
  arguments and so the functions $\alpha,\beta$ are equal to a list of projection functions on $m$. 

  Consider the map $\gamma_2:k\to DL k$ defined by $\gamma_2(x_1,\ldots,x_k)=(x_1,x_1\vee
  x_2,\ldots, \bigvee_{i=1}^k x_i)$.  Then $\gamma_2(d')=d'$, as $d'$ is already
  ordered.  Let $\gamma=\gamma_1\gamma_2$. Each finite linear order is
  isomorphic to an interval in a free distributive lattice of the same
  size. In particular, $(x_1,x_1\vee x_2,\ldots, \bigvee_{i=1}^k x_i)$
  is isomorphic as a linear order to $d'$. Since $\alpha d= \alpha \gamma d'= \beta \gamma d'$,
  we have $\alpha\gamma=\beta\gamma$.  It follows that $D$ is free.
\end{proof}

Let $\mathcal{E}$ be Johnstone's topological topos.
\begin{thm}[Cubical geometric realization]\label{geom} There is a geometric morphism
  $r:\mathcal{E}\to\hat\Delta\to \hat\Box$ defined using $I:=[0,1]$ in
  $\mathcal{E}$. Moreover, $r^*$ is
  the following weighted colimit  
\[r^*K=K\otimes_\Box I^\bullet=\int^{n\in\Box}K(n)\times I^n.\]
\end{thm}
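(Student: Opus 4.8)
The plan is to build the geometric morphism $r\colon\mathcal E\to\psh\Box$ by composing two geometric morphisms: the cubical-to-simplicial comparison supplied by Proposition~\ref{linDL}, and Johnstone's classical simplicial realization $\mathcal E\to\psh\Delta$. The latter exists because $\psh\Delta$ classifies strict linear orders with distinct endpoints and $I=[0,1]$, viewed as an object of the topological topos $\mathcal E$, is such an order (this is exactly Johnstone's result recalled in the introduction). Concretely, $[0,1]$ with its usual order and endpoints $0\neq 1$ is a strict linear order with distinct endpoints \emph{internally} in $\mathcal E$ — the key point being that $\mathcal E$ is a subtopos of $\psh\Delta$ in which the Dedekind reals behave well — so it is classified by a geometric morphism $\mathcal E\to\psh\Delta$ whose inverse image sends a simplicial set $X$ to the coend $\int^{[n]\in\Delta}X([n])\times I^n$, where $I^n\subseteq I^{n+1}$ is the standard topological $n$-simplex $\{0\le t_1\le\cdots\le t_n\le 1\}$.

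**Next I would** apply Proposition~\ref{linDL} to $D=I$: since $[0,1]$ is a linear order it defines a flat distributive lattice in $\mathcal E$, and by Theorem~\ref{class} flat distributive lattices are classified by $\psh\Box$, so we get a geometric morphism $r\colon\mathcal E\to\psh\Box$ directly, with inverse image $r^*=({-})\otimes_\Box S_I^{\bullet}$ by the general theory of flat functors (Proposition~\ref{prop:flatfree}): the inverse-image functor associated to a flat functor $F\colon\Box\to\mathcal E$ is the left Kan extension of $F$ along Yoneda, i.e.\ the coend $r^*K=\int^{n\in\Box}K(n)\times F(n)$, and here $F=S_I$ with $S_I(n)=I^n$. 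Then I would check that this $r$ factors as $\mathcal E\to\psh\Delta\to\psh\Box$ through the comparison morphism of Proposition~\ref{linDL}: both legs are determined by where they send the generic flat distributive lattice, and on $\mathcal E$ both routes send it to the flat lattice $S_I$, so the two geometric morphisms agree up to canonical isomorphism. This simultaneously proves the factorization claim and identifies $r^*$ with the stated weighted colimit $K\otimes_\Box I^\bullet=\int^{n\in\Box}K(n)\times I^n$.

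**The main obstacle** I anticipate is the careful bookkeeping of the lattice structure on $I^n$ that makes $S_I$ a genuine cocubical object: the connection maps $\wedge,\vee$ must be realized by the lattice operations $\min,\max$ on $[0,1]$, the diagonals by the diagonal $I\to I^2$, and the degeneracies by projections — and one must verify that the coend $\int^{n\in\Box}K(n)\times I^n$ is computed correctly, i.e.\ that the Kleisli-category structure of $\Box$ (maps $n\to DL(m)$) acts on the $I^n$ via the distributive-lattice operations exactly as dictated by $S_I$ from the paragraph ``Flat distributive lattices''. The verification that $[0,1]$ is \emph{internally} a flat distributive lattice (as opposed to merely externally a lattice) is handled by Proposition~\ref{linDL} combined with the fact that geometric constructions are preserved by the inverse image of the points of $\mathcal E$, but one should be slightly careful that the decidability of the order used in the proof of Proposition~\ref{linDL} is available in $\mathcal E$; since $\mathcal E$ is a subtopos of $\psh\Delta$ and the argument only needs the geometric content of flatness, this is fine. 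Everything else — functoriality of the coend, the comparison with Johnstone's simplicial realization, the identification of $r^*$ as a left Kan extension — is formal once the setup is in place.
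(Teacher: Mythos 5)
Your proposal follows essentially the same route as the paper: $[0,1]$ is a linear order in $\mathcal{E}$ (Johnstone's result, proved classically), hence a flat distributive lattice by Proposition~\ref{linDL}, hence classified by a geometric morphism $\mathcal{E}\to\psh\Box$ factoring through $\psh\Delta$, with $r^*$ given by tensoring with the cocubical object $S_I$, $S_I(n)=I^n$, which is exactly the stated coend. One correction to a side remark you lean on twice: $\mathcal{E}$ is \emph{not} a subtopos of $\psh\Delta$ (the realization morphism is not an embedding), so the internal linearity and decidability of the order on $[0,1]$ in $\mathcal{E}$ cannot be obtained that way; they rest on Johnstone's classical argument, which the paper explicitly flags as non-constructive in the discussion following the theorem.
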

\begin{proof}
  By analogy to the treatment by Johntone~\cite[Thm 8.1]{johnstone1979topological}, we define the
  cubical realization by using that $I$ in Top is a flat
  distributive lattice,
  using proposition~\ref{linDL}. By construction,
  $r^*K=S(I)\otimes_\Box K$, where we use
  Proposition~\ref{prop:flatfree} that $S(I)$ is a cocubical
  set; see~\cite[VI.5]{maclanemoerdijk}. Johnstone proves the preservation of
  colimits from $Top$ to $\mathcal{E}$ and we know that
  $S_n(I)=I^n$. This gives the formula above, where the right hand
  side is the usual coend formula ($S \otimes_C T = \int^c S(c) \otimes D(c)$) of the tensor product.

 For every $X$ in $\mathcal{E}$, $r_*X(n)=\hom_\mathcal{E}(I^n,X)$. As the sequential
 spaces form a (reflective) subcategory of $\mathcal{E}$~\cite[Lem.\ 2.1]{johnstone1979topological}, this reduces to the
 cubical singular complex in the case of such spaces. This is
 reminiscent of the cubical realization studied by Jardine~\cite{jardine2002cubical}.
\end{proof}

Concretely, by the computation above, this geometric morphism $r$ is
represented by the flat functor $S_D(n)=D^n=y_{[1]}^n$. From this we
can compute the inverse image of the geometric morphism~\cite[B3.2.7]{elephant}.

Johnstone uses classical logic to prove that the unit interval is a
linear order. Without classical logic we can still define the
geometric realization of simplicial sets and of cubical sets. This
realization lands in the category $\mathcal{F}$ of so-called
sequential spaces, a reflective subcategory of $\mathcal{E}$.
\begin{prop}
The simplicial geometric realization $G$ is left exact iff $[0,1]$ is a
linear order. 
The cubical geometric realization is is left exact iff $[0,1]$ is a flat distributive lattice.
\end{prop}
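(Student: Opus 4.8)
The plan is to prove both biconditionals by the same strategy: a geometric realization functor is left exact precisely when it preserves finite limits, and since these realizations are defined as weighted colimits $K \mapsto K \otimes_C I^\bullet$ (with $C = \Delta$ or $C = \Box$), the key is to recognize that left exactness of such a functor is equivalent to the defining weight being a \emph{flat} cocubical (resp.\ cosimplicial) object. First I would recall the general fact, used implicitly in Proposition~\ref{prop:flatfree} and Theorem~\ref{geom}, that for a small category $C$ a cocontinuous functor $\psh{C} \to \mathcal{X}$ into a topos corresponds to a functor $C \to \mathcal{X}$, its restriction along Yoneda, and that this cocontinuous functor is left exact if and only if the corresponding functor $C \to \mathcal{X}$ is flat (filtering). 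This is exactly Diaconescu's theorem / the universal property of the presheaf topos as classifier of flat functors. So $G = r^*$ is left exact iff $S_{[0,1]} : \Delta \to \mathcal{E}$ is flat, and the cubical realization $r^*$ is left exact iff $S_{[0,1]} : \Box \to \mathcal{E}$ is flat.

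Next I would identify what flatness of these specific functors means concretely. By Johnstone's analysis in the simplicial case (as cited, \cite[Thm 8.1]{johnstone1979topological}), $\psh{\Delta}$ classifies strict linear orders with distinct endpoints, so a flat functor $\Delta \to \mathcal{E}$ is exactly a model of that theory in $\mathcal{E}$ built from the generic interval — that is, the statement that $[0,1]$, with its order, is such a linear order internally in $\mathcal{E}$. For the cubical side, Theorem~\ref{class} tells us $\psh{\Box}$ classifies flat distributive lattices, so by the same token a flat functor $\Box \to \mathcal{E}$ corresponds exactly to the datum that $[0,1]$ is a flat distributive lattice in $\mathcal{E}$. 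In both cases Proposition~\ref{linDL} and its simplicial analogue supply one direction directly: if $[0,1]$ is a linear order (resp.\ flat distributive lattice), then it defines the appropriate flat functor, hence the realization preserves finite limits. For the converse I would argue that if the realization is left exact then the associated functor $S_{[0,1]}$ is flat by the classifier property, and then unwind flatness back into the order-theoretic / lattice-theoretic statement, using that $S_{[0,1]}(n) = [0,1]^n$ and that the generators recover the order relation and lattice operations on $[0,1]$.

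One subtlety I would be careful about is the distinction between ``linear order'' and ``flat distributive lattice'': every linear order yields a flat distributive lattice (Proposition~\ref{linDL}), but flatness of the distributive lattice is the genuinely weaker hypothesis corresponding to cubical realization, which is why the cubical statement is stated with that weaker condition while the simplicial statement needs the full linear-order hypothesis. I would make sure the proof reflects this asymmetry honestly rather than collapsing the two. A second point to handle with care is the ambient logic: since $\mathcal{E}$ is a topos and we are working without classical logic, ``$[0,1]$ is a linear order'' must be read as the geometric theory's axioms holding internally in $\mathcal{E}$ (strict linear order, distinct endpoints), and likewise ``flat distributive lattice'' internally; the biconditionals are then theorems of the internal logic, and I would phrase them accordingly.

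The main obstacle I anticipate is the reverse implication's bookkeeping: turning ``$r^*$ preserves finite limits'' into the concrete internal statement about $[0,1]$. This requires being precise about how the classifying-topos correspondence (left exact cocontinuous functor $\leftrightarrow$ flat functor $\leftrightarrow$ internal model) interacts with the specific presentation $S_D(n) = D^n$, and checking that the flatness conditions (inhabited, transitivity, freeness) for $S_{[0,1]}$ translate exactly into decidability/totality of the order on $[0,1]$ in the simplicial case and into the disjunction property plus the other flat-distributive-lattice axioms (as spelled out after Proposition~\ref{prop:flatfree}) in the cubical case. Everything else — the forward direction, the identification of the weight, the coend formula — is already available from the preceding results, so the argument should be short modulo this translation.
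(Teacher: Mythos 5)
Your proposal follows essentially the same route as the paper: both directions rest on the classifying-topos correspondence, identifying left exactness of the cocontinuous realization with flatness of $S_{[0,1]}$ and hence with $[0,1]$ being a model of the relevant geometric theory (linear order for $\Delta$, flat distributive lattice for $\Box$) internally in $\mathcal{E}$. The one step the paper spends most of its proof on and which you elide is the transfer of left exactness between the realization valued in the reflective subcategory $\mathcal{F}$ of sequential spaces --- where the constructive realization actually lands --- and its composite into $\mathcal{E}$: since $\mathcal{F}$ is reflective, limits of diagrams in $\mathcal{F}$ coincide with those computed in $\mathcal{E}$, so the two notions of left exactness agree, and only then does the classifying property of $\widehat{\Delta}$ (resp.\ $\widehat{\Box}$) apply. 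You implicitly identify $G$ with $r^*$ landing in $\mathcal{E}$, which skips this bridge; adding that one observation would make your argument match the paper's in full. Your more detailed unwinding of flatness in the converse direction is not needed once the classifying property is invoked, but it does no harm.
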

\begin{proof}
If the geometric realization to spaces is left exact (i.e.\ preserves
limits), then so is the realization in $\mathcal{E}$, because $\mathcal{F}$ is a
reflective subcategory of $\mathcal{E}$ and hence the limits coincide.
Being left exact this functor is part of a geometric morphism
$\mathcal{E}\to \hat\Delta$. By the classifying property of
$\hat\Delta$, this means that $[0,1]$ is a linear order.

Conversely, if $[0,1]$ is a linear order, then
$r^*:\psh\Delta\to\mathcal{E}$ is left exact. Since it lands in $\mathcal{F}$
and the limits coincide, the geometric realization is left exact too.

A similar argument works for the cubical sets.
\end{proof}

By our construction, the cubical realization factors via the simplicial
realization. It is interesting to note that Jardine has geometrical
realizations going both into $\psh\Delta$ and into $Top$. His formulas for
the simplicial realization~\cite[p10]{jardine2002cubical} of a cubical
set are analogous to the one in Theorem~\ref{geom}.

The cubical realization of a topological space in fact has
compositions~\cite{Cubical}, so it is constructively fibrant.

\subsubsection{Alternative combinatorial structures}
\paragraph{Simplicial sets}
Like the cube category, the simplex category can also be presented by
a duality. 
The opposite of the simplex category (of ordinals with face
and degeneracy maps) is the category $\nabla$ of intervals, finite
linear orders with distinct $\top$ and $\bot$ and monotone functions
preserving these. Here 2 is an dualizing object~\cite{Wraith}.
The Yoneda embedding of this element gives the universal order in
sSets. It is called $V$ in~\cite[p458]{maclanemoerdijk}.
To connect this with the cubical sets, observe that
$\hom_{\Box}(-,\two)=\hom_{DL}(DL(1),-)\cong DL(-)$. For the
isomorphism in the previous sentence observe that such maps are determined by what happens to
the generator.

\paragraph{Bi-pointed sets} 
Awodey\footnote{\url{https://ncatlab.org/homotopytypetheory/files/AwodeyDMVrev.pdf}} uses cubes with diagonals (but without connections, or
reversions). This is Grothendieck's simplest test category. This cube category, $H$, is the free finite product category on
an interval. Awodey observes that $H=\Theta\op_2$, the Lawvere theory of bi-pointed sets.
As above, we also obtain a cubical realization for the topos $\psh H$ which classifies strictly bi-pointed sets. Moreover, this can
even be seen as a geometric morphism to the Giraud topos~\cite{johnstone1979topological}, roughly
sheaves over Top, as the
topological interval \emph{is} strictly bipointed, but not
constructively a strict
linear order.

We would like to relate
Coquand's and Awodey's cubical sets. There is a free distributive lattice on a bipointed set. As both theories are Cartesian,
the free distributive lattice over a bipointed set is geometric. We obtain a geometric
morphism from $\psh{H}\to\psh\Box$. This is the geometric morphism
obtained from the functor $H\op=\Theta\op_2\to \Theta\op_{DL}=\Box\op$ which is faithful, but not
full. Hence the geometric morphism is not an embedding.

We obtain a geometric morphism in the opposite direction, $\psh\Box\to \psh{H}$, by observing that
every distributive lattice is bipointed. However, this is not the inverse
of the former map.


\paragraph{De Morgan algebras and
  involutions}\label{sec:demorg-algebr-invol}
Since the cubical model uses De Morgan algebras, it is tempting to consider
the geometric realization for these cubical sets. 
Johnstone's argument shows that $[0,1]$ is also an \emph{involutive} linear
order with endpoints. Involutive means that we have extra rules: $a^{**}=a$, $a^*<b^*$ iff
$b<a$, and thus $a<b^*$ iff $b<a^*$. An involutive linear order is a
flat De Morgan algebra.

However, it seems that the classifying topos for involutive linear
orders has not been used in homotopy theory, so we will not pursue
this line further.

We would like to compare the toposes $\psh{\Theta_{DL}}$ and
$\psh{\Theta_{DM}}$. However, the obvious maps are not embeddings:
The functor $\Theta_{DL}\to\Theta_{DM}$ defined by $DL(n)\to DM(n)$ is
not full: the map $DM(1)\to DM(1)$ generated by $x\mapsto x^*$ is not
in the range. Likewise, the functor $\Theta_{DM}\to\Theta_{DL}$ defined by $DM(n)\to DL(2n)$ is
not full as we have $f(a^*)=f(a)^*$, so the generators are related by
the maps.

\section{Moore paths and identity types}
In this subsection, we focus on De Morgan algebras, instead of distributive
lattices. It makes for a slightly smoother presentation, but these
reversions are not strictly needed~\cite{Docherty}.
Coquand's presentation of the cubical model does not build on a
general categorical framework for constructing models of type theory.
Meanwhile a more abstract description has been
given~\cite{GCTT,ortonpitts} motivated by~\cite{Cubical:internal} and the
present work which was announced in~\cite{Spitters:TYPES}.

Here we pursue Docherty's model of identity types on cubical
sets with connections~\cite{Docherty}. This uses the general theory of path object
categories~\cite{vdBG}. We present a slightly different construction
from~\cite{Docherty} using similar tools, but
simplified by the use of internal reasoning, starting from the
observation that the generic De Morgan algebra $\I$ represents the interval
in cubical type theory.
To obtain a model of identity types on a category $C$ it suffices to
provide an involutive `Moore path' category object on $C$ with certain
properties. Now, category objects on cubical sets are 
categories in that topos. 
The Moore path category $MX$ consists of
lists of composable paths $\I\to X$ with the zero-length paths $e_x$ as
left and right identity. This is an instance of the general \emph{path
category} on a directed graph. In this case, the graph with edges given by
elements of $X^\I$. To obtain a \emph{nice} path object category ($M1\cong1$),
we quotient by the relation which removes lists of constant paths
$(\lambda \_.x)$ from
the list; cf~\cite[Def 3.9]{Docherty}. Since $1^\I\cong1$ the only Moore
path in 1 is in fact the zero-length path. Hence, $M1\cong1$.

The reversion $\neg$ on $\I$ allows us to reverse
paths of length 1. This reversion extends to paths of any length. We obtain an
involutive category: Moore paths provide strictly associative
composition, but non-strict inverses.

A path contraction is a map $\operatorname{con}:MX\to MMX$ which maps a path $p$ to a
path from $p$ to the constant path on $tp$ ($t$ for target).
Like Docherty~\cite[Def 3.2.3]{Docherty}, we use connections to first define the map $con_1$ from $X^\I$ to $X^{\I\times
  \I}$ by $\lambda p. \lambda i j. p(i\vee j)$ and then extended this
to a contraction. For a Moore path of length three, this looks like:
\begin{center}\small{
\includegraphics{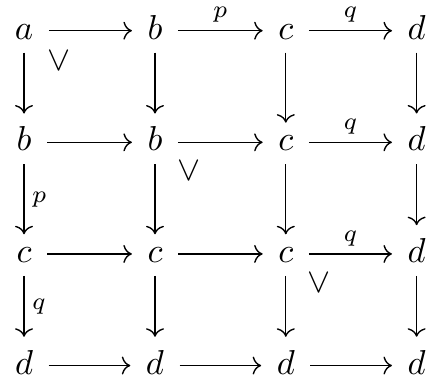}}
\end{center}%
All these constructions are algebraic and hence work functorially. We
obtain a nice path object category.

We have obtained a model of identity types~\cite{vdBG,Docherty} starting from
the interval $\I$ in the cubical model. The connection between this model and the one by Coquand is
not entirely clear. We will explore what can be done when we restrict
to the fibrant types, the types with composition structure.

\subsection{Id, Path and Moore}
In this section we study the relation between the three constructions
of identity types. The Path type $-^\I$ is the main identity type in
cubical type theory. However, it does not satisfy the judgemental
computation rule for the $J$-elimination. From $Path$ one can define a
new type $Id$ which does satisfy the judgemental rule~\cite[9.1]{Cubical}.
   \begin{mathpar}
    \inferrule{\Gamma,\phi\vdash a:A\qquad \Gamma\vdash p:\Path_A a b [\phi\mapsto
      \langle{\um}\rangle a]
    }{\Gamma\vdash (p,[\phi\mapsto a]):\id_A a b}
  \end{mathpar}
\paragraph{Id and Path}
We start with some short observations about Path and Id. The section $\lambda p.(p,0): \Path \to \id$ has $\pi_1$ as a
(judgemental) retraction. Moreover, it has been formally checked that they give rise a
Path-quasi inverse%
\footnote{\texttt{idToEqToId}, \texttt{eqToIdToEq} in \url{https://github.com/mortberg/cubicaltt/tree/defeq}}. This can be weakened to an Id-quasi-inverse.

The proofs that $\Pi,\Sigma$ respect equivalence do not require univalence.
It follows that all the predicates which are defined from an equality are
equivalent with respect to either notion of equality. In particular, this holds for
being contractible, being a proposition and being an equivalence. 

\paragraph{Moore paths}
We will show that for each fibrant type $A$, there is a new fibrant
type $M_0A$ of Moore paths. Cubical type theory supports inductive types and since it has a identity type, it is
expected to support inductive families~\cite{dybjer1994inductive} too.
However, this has not been formally checked yet. Instead, we can
define the composition explicitly for one particular inductive family.
This is a slight variation on the transitive reflexive closure of the
Path relation.
We make the following definition in cubical \emph{sets} and will show that it
has a composition operation.
\begin{eqnarray*} 
\mathrm{Inductive}\ M_0A (x:A) &:& A \to \operatorname{Type} :=\\
  e&:&\Pi_{a:A}M_0A a a\\
  c&:&\Pi_{p:PA}\Pi_{l:M_0 A (p1) a}M_0A (p0) a
\end{eqnarray*}
We will write $M_0A$ for $\Sigma_{a b}M_0A\ a\ b$. As we've seen above in
the definition of contraction for Moore paths, paths between Moore
paths will preserve the length, but may vary the points.

First we recall the definition of composition on lists. It is defined
recursively by:
\begin{eqnarray*}
 \comp^i (\List A) [\phi \mapsto \nil] \nil &= & \nil \\
 \comp^i (\List A) [\phi \mapsto a::l ] a'::l' &=  &
 \comp^i A [\phi \mapsto a ] a' :: \comp^i (\List A) [\phi \mapsto l ] l'
\end{eqnarray*}

Similarly,
\begin{eqnarray*}
 \comp^i M_0A [\phi \mapsto e_a] e_{a'} &= & e_{\comp^i A [\phi \mapsto a] a'} \\
 \comp^i M_0A [\phi \mapsto p::l ] p'::l' &=  &
 \comp^i PA [\phi \mapsto p ] p' :: \comp^i M_0A [\phi \mapsto l ] l'
\end{eqnarray*}
Where we have written $p::l$ for the concatenation $c p l$.

To obtain a nice path category, we would like to quotient Moore paths upto constant paths; as in~\cite[Dfn
3.9]{Docherty}. Unfortunately, constancy of paths is not decidable and
neither is it respected by comp. So, this approach seems stuck here. As Docherty~\cite[Ch7]{Docherty} points out, it may be
possible to relax the requirement that $M1\cong1$. So, it seems worth
recording the facts above, even if they are not yet conclusive.
\emph{If} such Moore paths can be defined on the fibrant types, they will have the judgemental computation rules for
$J$, and being equivalent type Path, will also satisfy univalence,
since~\cite[Cor.11]{Cubical} works for any other notion of identity which is
reflexive and satisfies the elimination rule for identity.

\section{Pre-model-structure}
The cubical model may be constructed in the internal language of the topos
of cubical sets~\cite{Cubical:internal,GCTT, ortonpitts}. 
As recalled in the previous section, one can introduce a
type Id which has a judgemental computation rule for the $J$
eliminator. A factorization system can be defined in the cubical type theory~\cite[9.1]{Cubical}.
The Gambino-Garner factorization system~\cite{gambino2008identity}
gives another factorization system. We will show that the two factorization systems together
form a pre-model-structure. Here we follow Lumsdaine~\cite{Lumsdaine:premodel}, with the twist
that he uses Id-types, rather than Path-types for the mapping
cylinder. 

To be precise, Gambino-Garner require identity
\emph{contexts}. Since the cubical model is only a
category with families, we need to consider the associated contextual category. I.e.\ we need to restrict to
the \emph{fibrant} cubical sets.

\begin{defn} A \emph{pre-model-structure} on a category $C$ consists of three classes
$(\mathcal{C}, \mathcal{F}, \mathcal{W})$ of maps of $C$,
such that $\mathcal{W}$ satisfies 3-for-2, and $(\mathcal{C},
\mathcal{F}\cap \mathcal{W})$ and $(\mathcal{C}\cap\mathcal{W}, \mathcal{F})$
are weak factorisation systems on $\mathcal{C}$.
\end{defn}

\begin{thm}
There is a pre-model-structure on \emph{fibrant} cubical sets. The factorizations are defined in the
cubical type theory, so they are uniform and functorial.
\end{thm}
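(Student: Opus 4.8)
The plan is to assemble the pre-model-structure from the two factorisation systems already discussed: the cofibration/trivial-fibration system coming from the cubical composition structure (\cite[9.1]{Cubical}) and the Gambino--Garner system built from the Id-types (\cite{gambino2008identity}), both transported to the contextual category of \emph{fibrant} cubical sets. Following Lumsdaine~\cite{Lumsdaine:premodel}, I would take $\mathcal{F}$ to be the fibrations (display maps of the contextual category), $\mathcal{W}$ the maps which are equivalences in the sense internal to the cubical type theory, $\mathcal{C}\cap\mathcal{W}$ the "trivial cofibrations" arising as the left class of the Gambino--Garner system generated by the identity-type path-object factorisation, and $\mathcal{C}$ the left class of the cubical cofibration/trivial-fibration system. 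So the theorem reduces to three things: (i) $(\mathcal{C},\mathcal{F}\cap\mathcal{W})$ is a weak factorisation system; (ii) $(\mathcal{C}\cap\mathcal{W},\mathcal{F})$ is a weak factorisation system; (iii) $\mathcal{W}$ satisfies 3-for-2 and $\mathcal{C}\cap\mathcal{W}$ is genuinely the intersection of the two classes named $\mathcal{C}$ and $\mathcal{W}$.

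First I would set up the contextual category: since the cubical model is a category with families, pass to the associated contextual category on fibrant types, and note that the factorisation system of \cite[9.1]{Cubical} — every map factors as a cofibration (levelwise decidable monomorphism with the appropriate composition-extension data) followed by a trivial fibration — is defined term-by-term in the type theory and hence is functorial and uniform; this gives system (i) essentially for free, provided one checks that trivial fibrations coincide with $\mathcal{F}\cap\mathcal{W}$, i.e.\ that a fibration is a weak equivalence exactly when it is a trivial fibration. That equivalence-characterisation is the standard "fibrations that are equivalences have sections and relative contractions" argument, carried out internally; I would cite the contractibility-of-fibres criterion. Next, for system (ii), I would invoke Gambino--Garner: their construction takes the identity-type structure (here realised via the $Id$-type of \cite[9.1]{Cubical}, or equivalently the Path/Moore path machinery of the previous section, using $Id$ for the mapping cylinder as Lumsdaine does) and produces a weak factorisation system whose right class is exactly the fibrations and whose left class is a class of trivial cofibrations closed under the relevant operations. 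Here I must check the Gambino--Garner hypotheses hold for the contextual category of fibrant cubical sets — chiefly that it is a category with identity contexts (not merely a CwF with Id-types), which is precisely why we restricted to fibrant objects and passed to the contextual category.

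The 3-for-2 property of $\mathcal{W}$ is the part I would handle by internal type-theoretic reasoning: an equivalence in cubical type theory is a map with contractible fibres, and contractibility is a $\Sigma$ of a Path/Id type, so 3-for-2 follows from the usual fact that, given $g\circ f$ and one of $f,g$ an equivalence, the third is too — provable in the type theory (indeed without univalence, as the excerpt notes that $\Pi,\Sigma$ respect equivalence and the predicate "is an equivalence" is equivalence-invariant). Finally, I would verify the coherence condition relating the two systems: that the left class of the Gambino--Garner system is contained in $\mathcal{C}$ (so that trivial cofibrations are cofibrations) and that $\mathcal{C}\cap\mathcal{W}$ really equals the Gambino--Garner left class; one direction is the lifting characterisation of the left class, the other uses that a cofibration which is an equivalence lifts against all fibrations, again an internal contractibility argument.

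The main obstacle I expect is (ii) together with this last coherence point: Gambino--Garner's theorem is stated for categories with identity contexts / display-map categories satisfying specific stability axioms, and one must genuinely check that the contextual category of fibrant cubical sets, with the $Id$-type of \cite[9.1]{Cubical} used in place of $Path$ for the mapping cylinder (Lumsdaine's twist), meets those axioms — in particular stability of the path-object factorisation under pullback and the interaction of $Id$ with the cubical composition structure on fibrant types. I would isolate this as a lemma ("fibrant cubical sets, with $Id$, form a category with identity contexts"), prove it by the explicit term-level constructions from the cubical type theory (which automatically gives uniformity and functoriality, yielding the final sentence of the theorem), and then let the general machinery of Gambino--Garner and Lumsdaine do the rest.
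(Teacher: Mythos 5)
Your proposal matches the paper's proof in all essentials: the same choice of classes (fibrations as display maps, weak equivalences as internal equivalences, cofibrations via lifting against trivial fibrations), the same two factorisation systems (Coquand's partial-section/mapping-cylinder construction from \cite[9.1]{Cubical} for $(\mathcal{C},\mathcal{F}\cap\mathcal{W})$ and the Gambino--Garner graph factorisation for $(\mathcal{C}\cap\mathcal{W},\mathcal{F})$), the same passage to the contextual category of fibrant objects to satisfy the identity-context hypothesis, and the same appeal to Lumsdaine for 3-for-2, retract-closure, and the formal identifications of the intersected classes. The only cosmetic difference is that you define the trivial cofibrations as the Gambino--Garner left class and verify it equals $\mathcal{C}\cap\mathcal{W}$ afterwards, whereas the paper defines $\mathcal{TC}=\mathcal{C}\cap\mathcal{W}$ outright and checks $tc_f$ lands there via a section--retraction argument; both reduce to the same orthogonality manipulations.
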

\begin{proof}
We define $\mathcal{F}_0$ to be the set of display maps
$\pi_1:\Sigma AB\to A$. Let $\mathcal{W}$ be the set of equivalences.
Let $\mathcal{TF}_0$ be the display maps which are also equivalences,
i.e.\ every fiber is \emph{Path}-contractible. The cofibrations $\mathcal C$ are the
maps with the lift lifting property (LLP) with respect to
$\mathcal{TF}_0$. 
This LLP is wrt diagrams which commute upto judgemental equality,
equality in the topos, not just upto a Path.
We define the trivial cofibrations $\mathcal{TC}=\mathcal{W}\cap\mathcal{C}$.

Gambino-Garner factors a function $f:A\to B$
through its graph
$\Sigma_{(y;x)}\id_B(f(x),y)$. The maps $tc_f(a):=(f(x);x;1_{fx})$ and
$\pi_1$ give the factorization of $f$. The map $tc_f$ is a section
with retraction $\pi_1\pi_2$. This shows that $tc_f\in \mathcal{TC}$.

The other factorization is given by the type $C_f:=\Sigma_bT_f(b)$,
where $T_f(b)$ consists of partial sections $[\phi\mapsto a]$ such that
$f(a)=b$ on $\phi$. The type $C_f$ is reminiscent of the mapping cylinder, the homotopy pushout of $1$ and
$f$, a higher inductive type (HIT) equivalent to $B$; see~\cite{hottbook}. While there is no
general theory of HITs in cubical yet, we can explicitly define the constructors of this HIT:\\
$\mathsf{inbase}\ b:=(b,[0\mapsto a_0]):C_f(b)$,\\
$\mathsf{intop}\ a:=(f(a),[1\mapsto a]):C_f(f a)$ and\\
$\mathsf{incyl}\ a:=(\langle\um\rangle f(a),a_0):\Path_{C_f}\mathsf{intop} a,
\mathsf{inbase} (f a)$.

Coquand shows that we have a factorization $(\mathcal{C},
\mathcal{TF}_0)$.

As Lumsdaine shows $\mathcal{W}$ has 3-for-2 and is closed under
retracts. To be precise, these are retracts between fibrant objects,
hence judgement equalities in the type theory. 

We can close the classes $\mathcal{F}_0$ and $\mathcal{TF}_0$ by
taking the double orthogonal, where being orthogonal is defined in the cubical type theory.

Lumsdaine shows that $\mathcal{TF}=\mathcal{TF}\cap\mathcal{W}$ and
$\mathcal{TC}=\mathcal{C}\cap\mathcal{W}$ by formal manipulations
about retracts and orthogonality. The same arguments go through here.  
\end{proof}
\begin{acknowledgement}
I would like to thank Steve Awodey and Thierry Coquand for discussions
on the topic of this paper.
Part of this work was done while the author was working at
Gothenburg University and, later, at Carnegie-Mellon University.\\
This research was partially supported by the Guarded homotopy type theory project, funded by the Villum Foundation, project number 12386.
\end{acknowledgement}
\bibliographystyle{alpha}
\bibliography{class}

\newcommand{\etalchar}[1]{$^{#1}$}
\begin{thebibliography}{CCHM15}

\bibitem[BBC{\etalchar{+}}16]{GCTT}
Lars Birkedal, Ale{\v{s}} Bizjak, Ranald Clouston, Hans~Bugge Grathwohl, Bas
  Spitters, and Andrea Vezzosi.
\newblock Guarded cubical type theory: Path equality for guarded recursion.
\newblock {\em CSL16}, 2016.

\bibitem[Bir37]{birkhoff1937rings}
Garrett Birkhoff.
\newblock Rings of sets.
\newblock {\em Duke Mathematical Journal}, 3(3):443--454, 1937.

\bibitem[CCHM15]{Cubical}
Cyril Cohen, Thierry Coquand, Simon Huber, and Anders {M\"ortberg}.
\newblock Cubical type theory: a constructive interpretation of the univalence
  axiom.
\newblock {\em \url{http://www.cse.chalmers.se/~coquand/cubicaltt.pdf}}, 2015.

\bibitem[CF77]{cornish1977coproducts}
William~H Cornish and Peter~R Fowler.
\newblock Coproducts of de morgan algebras.
\newblock {\em Bulletin of the Australian Mathematical Society}, 16(01):1--13,
  1977.

\bibitem[CF79]{cornish1979coproducts}
William~H Cornish and Peter~R Fowler.
\newblock Coproducts of kleene algebras.
\newblock {\em Journal of the Australian Mathematical Society (Series A)},
  27(02):209--220, 1979.

\bibitem[Coq15a]{Cubical:internal}
Coquand.
\newblock Internal version of the uniform kan filling condition.
\newblock {\em \url{www.cse.chalmers.se/~coquand/shape.pdf}}, 2015.

\bibitem[Coq15b]{connections}
Thierry Coquand.
\newblock Lecture notes on cubical sets.
\newblock 2015.

\bibitem[Doc14]{Docherty}
Simon Docherty.
\newblock A model of type theory in cubical sets with connections.
\newblock Master's thesis, University of Amsterdam, 2014.

\bibitem[Dyb94]{dybjer1994inductive}
Peter Dybjer.
\newblock Inductive families.
\newblock {\em Formal aspects of computing}, 6(4):440--465, 1994.

\bibitem[GG08]{gambino2008identity}
Nicola Gambino and Richard Garner.
\newblock The identity type weak factorisation system.
\newblock {\em Theoretical Computer Science}, 409(1):94--109, 2008.

\bibitem[Jar02]{jardine2002cubical}
John~F Jardine.
\newblock Cubical homotopy theory: a beginning.
\newblock {\em preprint}, 2002.

\bibitem[Joh79]{johnstone1979topological}
Peter~T Johnstone.
\newblock On a topological topos.
\newblock {\em Proceedings of the London mathematical society}, 3(2):237--271,
  1979.

\bibitem[Joh02]{elephant}
P.T. Johnstone.
\newblock {\em Sketches of an Elephant: A Topos Theory Compendium, vol. 1-2}.
\newblock Number~44 in Oxford Logic Guides. Oxford University Press, 2002.

\bibitem[JW78]{johnstone1978algebraic}
Peter~T Johnstone and Gavin~C Wraith.
\newblock Algebraic theories in toposes.
\newblock In {\em Indexed categories and their applications}, pages 141--242.
  Springer, 1978.

\bibitem[Lum11]{Lumsdaine:premodel}
Peter Lumsdaine.
\newblock Model structures from higher inductive types.
\newblock {\em
  \url{http://peterlefanulumsdaine.com/research/Lumsdaine-Model-strux-from-HITs.pdf}},
  2011.

\bibitem[MM12]{maclanemoerdijk}
Saunders MacLane and Ieke Moerdijk.
\newblock {\em Sheaves in geometry and logic: A first introduction to topos
  theory}.
\newblock Springer, 2012.

\bibitem[OP16]{ortonpitts}
Ian Orton and Andrew~M Pitts.
\newblock Axioms for modelling cubical type theory in a topos.
\newblock 2016.

\bibitem[Pit14]{pitts2014equivalent}
Andrew Pitts.
\newblock An equivalent presentation of the {Bezem-Coquand-Huber} category of
  cubical sets.
\newblock {\em arXiv:1401.7807}, 2014.

\bibitem[Spi15]{Spitters:TYPES}
Bas Spitters.
\newblock Cubical sets as a classifying topos.
\newblock {\em TYPES'15 \url{http://www.cs.au.dk/~spitters/TYPES15.pdf}}, 2015.

\bibitem[{Uni}13]{hottbook}
The {Univalent Foundations Program}.
\newblock {\em Homotopy Type Theory: Univalent Foundations for Mathematics}.
\newblock \url{http://homotopytypetheory.org/book}, Institute for Advanced
  Study, 2013.

\bibitem[vdBG12]{vdBG}
Benno van~den Berg and Richard Garner.
\newblock Topological and simplicial models of identity types.
\newblock {\em ACM transactions on computational logic (TOCL)}, 13(1):3, 2012.

\bibitem[Vic07]{vickers2007locales}
Steven Vickers.
\newblock Locales and toposes as spaces.
\newblock In {\em Handbook of spatial logics}, pages 429--496. Springer, 2007.

\bibitem[Wra93]{Wraith}
Gavin~C Wraith.
\newblock Using the generic interval.
\newblock {\em Cahiers de Topologie et G{\'e}om{\'e}trie Diff{\'e}rentielle
  Cat{\'e}goriques}, 34(4):259--266, 1993.

\end{thebibliography}
\end{document}